\documentclass[sigconf, nonacm]{acmart}
\usepackage{siunitx}
\usepackage{graphicx}
\usepackage{float, amsthm}
\usepackage{caption}
\usepackage{subcaption,enumitem}
\usepackage{comment}
\usepackage{mathtools, multirow}
\usepackage{color}

\usepackage[table]{xcolor}

\usepackage[ruled,vlined,linesnumbered]{algorithm2e}
\usepackage{amsmath}  
\usepackage{amssymb}

\SetAlFnt{\tiny\sffamily}
\SetAlCapFnt{\tiny\sffamily\bfseries}
\SetAlCapNameFnt{\tiny\sffamily}
\SetCommentSty{\color{gray}\sffamily}

\AtBeginDocument{%
  \providecommand\BibTeX{{%
    \normalfont B\kern-0.5em{\scshape i\kern-0.25em b}\kern-0.8em\TeX}}}

\renewcommand\footnotetextcopyrightpermission[1]{} 
\begin{document}

\title{Sense Less, Infer More: Agentic Multimodal Transformers for Edge Medical Intelligence}

\author{Chengwei Zhou$^{1,*}$, Zhaoyan Jia$^{2,*}$, Haotian Yu$^{1}$, Xuming Chen$^{1}$, Brandon Lee$^{1}$\\
Christopher Pulliam$^{1}$, Steve Majerus$^{1}$, Massoud Pedram$^{2}$, and Gourav Datta$^{1}$ \vspace{0.3em}\\
\small $^1$Case Western Reserve University, USA \\
$^2$University of Southern California, USA \\
$^*$Equal contributions
\vspace{-0.3em}
\\}






\begin{abstract}

Edge-based multimodal medical monitoring requires models that balance diagnostic accuracy with severe energy constraints. Continuous acquisition of ECG, PPG, EMG, and IMU streams rapidly drains wearable batteries, often limiting operation to under 10 hours, while existing systems overlook the high temporal redundancy present in physiological signals. We introduce \textbf{Adaptive Multimodal Intelligence (AMI)}, an end-to-end framework that jointly learns \emph{when} to sense and \emph{how} to infer. AMI integrates three components: (1) a lightweight \textit{Agentic Modality Controller} that uses differentiable Gumbel--Sigmoid gating to dynamically select active sensors based on model confidence and task relevance; (2) a \textit{Learned Sigma--Delta Sensing} module that applies patch-wise Delta--Sigma operations with learnable thresholds to skip temporally redundant samples; and (3) a \textit{Foundation-backed Multimodal Prediction Model} built on unimodal foundation encoders and a cross-modal transformer with temporal context, enabling robust fusion even under gated or missing inputs. These components are trained jointly via a multi-objective loss combining classification accuracy, sparsity regularization, cross-modal alignment, and predictive coding. AMI is hardware-aware, supporting dynamic computation graphs and masked operations, leading to real energy and latency savings. Across MHEALTH, HMC Sleep, and WESAD datasets, it reduces sensor usage by $48.8\%$ while improving the state-of-the-art accuracy by $1.9\%$ on average. 
Theoretically, AMI achieves $\mathcal{O}(k^*/\epsilon^2)$ sample complexity with logarithmic convergence, improving upon the $\mathcal{O}(M/\epsilon^2)$ requirement of decoupled sensing.

\end{abstract}



\vspace{-8mm}
\keywords{multimodal, agentic, Gumbel-Sigmoid, Sigma-Delta, gating.}



\maketitle
\pagestyle{plain}

\vspace{-2mm}
\section{Introduction}

Edge-based medical monitoring systems face a fundamental challenge: multimodal sensors (ECG, PPG, EMG, IMU, and even audio or respiration) provide rich physiological data, but dramatically increase energy consumption on battery-powered devices \cite{shimmer,ads1292r,max30101}. Fig.~\ref{fig:battery} illustrates this issue: using power values reported in sensor datasheets (e.g., 0.3–1 mW IMU, 1–5 mW ECG, 6–15 mW EMG, 4–10 mW PPG), a wearable with a 300 mWh battery can support each sensor alone for hundreds of hours, yet combining them reduces runtime to under 10 hours. This mismatch between multimodal sensing and limited battery capacity severely restricts long-term, continuous monitoring scenarios such as sleep staging, cardiac surveillance, or stress detection \cite{mhealth_319,schmidt2018introducing}. 
This challenge has become even more severe with the rise of on-device multimodal machine learning (ML) technologies. Modern architectures, including cross-modal Transformers, multimodal sleep staging models, and physiological foundation models \cite{pillai2024papagei,abbaspourazad2024largescale,thapa,fang2024promoting}, achieve state-of-the-art (SOTA) accuracy by aligning and fusing heterogeneous biosignals, but their attention layers, tensor operations, and memory access patterns substantially exceed the energy cost of sensing itself \cite{s24196322}. Hence, multimodal sensing combined with continuous on-device ML inference can drain a wearable battery in under an hour, making current approaches incompatible with even continuous monitoring. 

\begin{figure}[!t]
    \centering
    \vspace{-1mm}\includegraphics[width=0.9\linewidth]{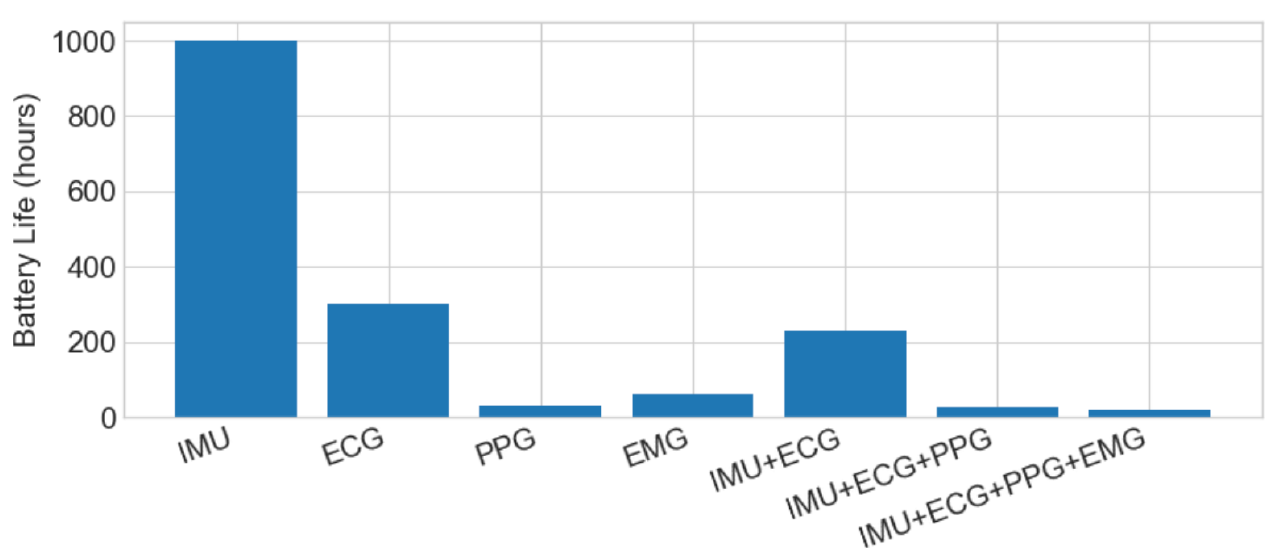}
    \vspace{-5mm}
    \caption{\small Illustrative battery life for uni-modal vs. multi-modal sensing (300 mWh battery). Low-power IMU-only monitoring can last for many days, but adding ECG, PPG, and EMG rapidly reduces battery life to tens of hours.}
    \vspace{-7mm}
    \label{fig:battery}
\end{figure}

The core insight of this work is that \textit{sensing decisions—both which sensors to activate and when to sample them—must be learned together with the inference task itself}. By making the multimodal model agentic, allowing it to dynamically control its own sensing policy while simultaneously learning to fuse and exploit discriminative information across modalities, the system can dramatically reduce energy consumption without compromising diagnostic accuracy. This requires solving \textit{three interconnected challenges}:

\noindent
\textbf{1. Joint optimization of sensing and inference.}
Prior works treat sensor selection and model inference separately, relying on heuristics or independently trained selection policies. This disconnect leads to poor decisions, activating sensors when unnecessary or disabling them when crucial. A unified approach is needed where sensing is guided by the model’s confidence and task demands.

\noindent
\textbf{2. Leveraging temporal redundancy.}
Biosignals are highly correlated over time (e.g., stable heart rate, persistent motion patterns), yet existing multimodal models process every window independently. 
Integrating temporal reuse into end-to-end training is key to reducing sampling and computation.

\noindent
\textbf{3. Hardware-aware adaptation.}
Energy-constrained edge devices require models that adjust computation to available resources, support variable input sizes, and guarantee predictable runtime behavior. Current multimodal architectures assume fixed computational graphs and cannot dynamically scale with hardware budgets.

To address these challenges, we introduce \textbf{Adaptive Multimodal Intelligence (AMI)}, an end-to-end trainable framework (see Fig. ~\ref{fig:framework}) that unifies sensing decisions and inference within a single optimization process. AMI encompasses three key contributions:
\vspace{-4mm}
\begin{itemize}[leftmargin=8pt]
\item \textbf{A unified architecture for joint sensing and inference optimization.} We use lightweight unimodal foundation models as modality-specific encoders, providing rich features from large-scale pretraining. The agentic controller jointly learns \emph{when} to sense each modality and \emph{how} to fuse the unimodal features through cross-modal attention, with causal temporal layers capturing context across signals. We jointly train sensor selection and multimodal fusion with the prediction task using a novel multi-objective loss that combines classification accuracy, sparsity regularization, cross-modal alignment, and predictive coding. This directs sensing effort where it is most needed, improving accuracy and efficiency while leveraging the robustness of foundation-model representations.

\item \textbf{Temporal-aware sensing through learned Sigma-Delta ($\Sigma$-$\Delta$) modulation.} We develop a differentiable $\Sigma$-$\Delta$ Sensing mechanism that identifies and skips temporally redundant samples. Unlike traditional $\Sigma$-$\Delta$ converters that use fixed thresholds, our approach learns modality-specific change thresholds that balance information preservation with sampling reduction. 

\item \textbf{Hardware-efficient implementation with dynamic computation graphs.} Through careful co-design with TensorRT optimization, we design our architecture to support variable-length inputs and masked computation, enabling true energy savings on resource-constrained edge devices.
\end{itemize}
\vspace{-0.6mm}
We evaluate AMI on three publicly available multimodal biomedical datasets: MHEALTH (activity recognition), HMC Sleep (sleep staging), and WESAD (stress detection). AMI reduces latency and energy consumption {by 31.9\% and 24.8\% respectively} (directly translating to extended battery life in continuous monitoring settings) on average, while simultaneously surpassing the SOTA accuracy by 1.4\%.
Beyond empirical results, we prove that our joint optimization achieves: (1) factor $M/k^*$ fewer samples through selective sensing, and (2) logarithmic $\mathcal{O}(\log(1/\epsilon))$ convergence versus $\mathcal{O}(M/\epsilon^2)$ for decoupled methods.

AMI supports not only efficient state prediction from biosignals but also the downstream step in closed-loop systems: converting predicted states into actions. In many neuro-stimulation and intervention settings, this state-to-action mapping is governed by simple, validated clinical rules (e.g., trigger stimulation when a pathological state is detected). By delivering accurate, low-latency state estimates under reduced sensing budgets, our model strengthens the reliability of these rule-based controllers. Although our learning focuses on sensing and inference, it completes the active-sensing loop, enabling wearable and implantable systems that both identify patient state efficiently and initiate timely therapeutics.

\begin{figure}[!t]
    \centering
\includegraphics[width=0.95\linewidth]{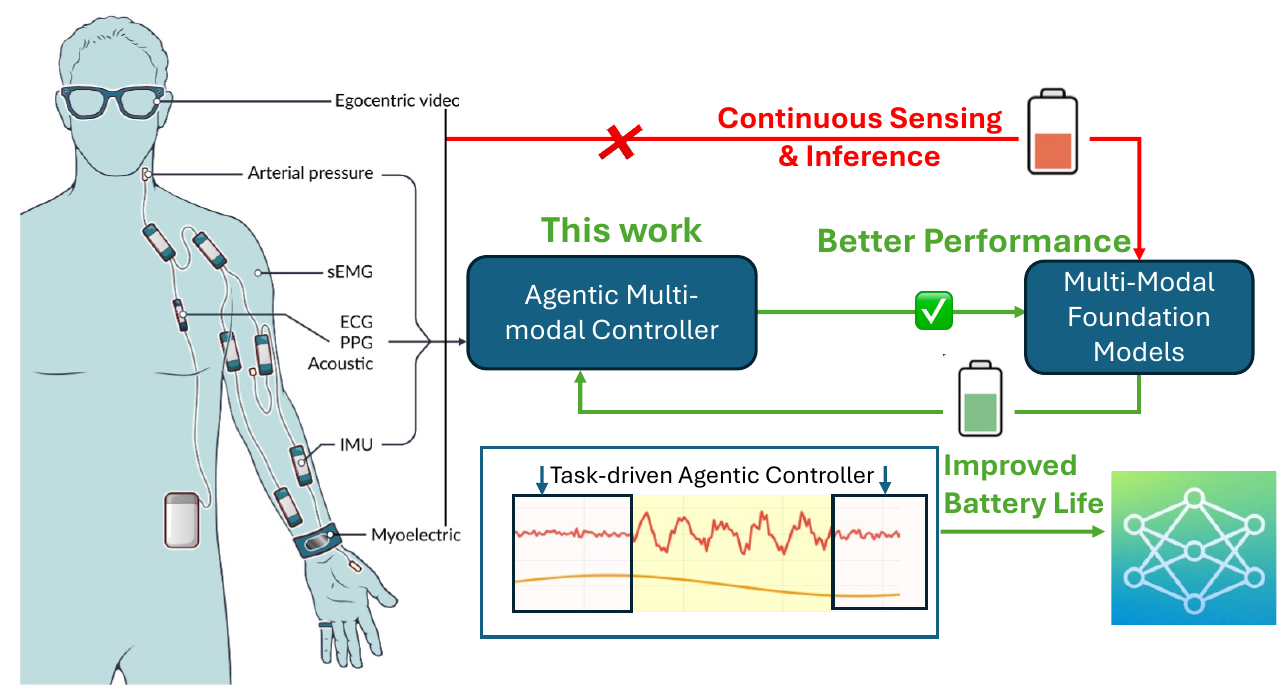}
\vspace{-3mm}
    \caption{\small Our Unified Agentic Multimodal Sensing and Inference Framework for Efficient, High-Accuracy Biomedical AI.}
    \vspace{-6mm}
    \label{fig:framework}
\end{figure}

\vspace{-2mm}
\section{Related Work}

\textbf{Multimodal Learning and Adaptive Sensing}: Multimodal deep learning has driven major progress in biomedical monitoring. Early CNN-based fusion of ECG, EEG, and motion signals~\cite{faust2018deep} improved robustness, while transformer architectures~\cite{zhang2022transformer, multimodal_fusion} enabled more expressive fusion mechanisms beyond simple concatenation. More recently, medical foundation models such as PhysioOmni~\cite{jiang2025towards}, SleepMG~\cite{ma2024sleepmg}, and medical vision--language adaptations~\cite{wang2023medical} have demonstrated strong generalization across tasks and modalities. However, these models rely on dense, continuous sampling from all sensors and incur heavy computational and energy costs. In parallel, sensor selection has been explored through sparsity methods~\cite{tibshirani1996regression, zou2005regularization}, RL-based policies~\cite{sharma2020sensor}, contextual bandits~\cite{bouneffouf2020contextual}, differentiable gating~\cite{jang2017categorical}, information-theoretic selection~\cite{zhao2021information}, Bayesian optimization~\cite{garnett2020bayesian}, meta-learning~\cite{finn2017model}, and neural architecture search (NAS)~\cite{elsken2019neural}. Yet these techniques operate \emph{outside} the prediction model and therefore cannot jointly optimize sensing, multimodal fusion, and task accuracy, limiting their practical effectiveness for energy-constrained, real-time medical devices.

\noindent
\textbf{Temporal Redundancy Reduction and Hardware Optimization}: Temporal redundancy has long been exploited in efficient sensing. Examples include $\Sigma$-$\Delta$ modulation~\cite{schreier2005understanding}, neuromorphic event sensors~\cite{liu2010neuromorphic}, compressed sensing~\cite{candes2006robust}, predictive coding~\cite{rao1999predictive}, next-frame prediction~\cite{lotter2017deep}, change detection~\cite{aminikhanghahi2017survey}, asynchronous processing~\cite{manohar2015asynchronous}, Skip-RNNs~\cite{campos2017skip}, and delta encoding~\cite{chen2010compressed}. Yet these methods mostly target \emph{single} modalities and are rarely paired with multimodal deep learning or optimized toward task-specific objectives. On the hardware side, edge-efficient inference has benefited from pruning, quantization, and distillation~\cite{han2016deep, jacob2018quantization, hinton2015distilling}, hardware-aware NAS~\cite{wu2019fbnet}, dynamic/conditional computation~\cite{han2021dynamic, bengio2013estimating}, early exits~\cite{teerapittayanon2016branchynet}, accelerator-level optimizations~\cite{nvidia2023tensorrt}, biomedical-specific hardware~\cite{lee2022biomedical}, in-memory computing~\cite{sebastian2020memory}, approximate computing~\cite{mittal2016survey}, and software–hardware co-design~\cite{9328612}. However, these techniques optimize \emph{fixed} computational graphs and cannot adapt to varying sampling rates or input dimensionality on the fly as signal content evolves.


\vspace{-3mm}

\section{Methodology}

In this section, we formalize the AMI framework (see Fig. ~\ref{fig:architecture} and Algorithm ~\ref{alg:empm}) that integrates
three key components:
(1) a \textbf{Foundation-backed Multimodal Prediction Model} (FMPM),
(2) an \textbf{Agentic Modality Controller} (AMC) responsible for modality-level gating, and
(3) a \textbf{Sigma–Delta Sensing Module} for patch-level temporal gating.
We subsequently introduce our training objective, and provide theoretical support for jointly optimizing masked sensing and inference for multimodal bio-signals.

\vspace{-2.5mm}
\subsection{Multimodal Prediction Model}

The proposed Foundation-backed Multimodal Prediction Model (FMPM) integrates foundation model-embedding encoder, cross-modal fusion, and temporal context into a unified Transformer-based framework. It aims to improve robustness under modality dropout and enhance inter-modality information flow.

\noindent\textbf{Foundation Modality Encoders}: Instead of learning each modality tokenizer from scratch, we employ SOTA lightweight \emph{foundation modality encoders} $f_{\text{FM}}^{(m)}$ pretrained on large-scale unimodal datasets. 
Specifically, we deployed ECG-FM ~\cite{mckeen2025ecg} for ECG signals and PaPaGei~\cite{pillai2024papagei} for PPG signals.
Each encoder converts a raw signal $x^{(m)}$ into token embeddings as $T^{(m)} = f_{\text{FM}}^{(m)}(x^{(m)}) \in \mathbb{R}^{B \times L \times D}$. These pretrained encoders provide domain-aware feature extraction, accelerating convergence and improving downstream generalization. Depending on compute constraints, the encoders can be fine-tuned or frozen during training.

\noindent\textbf{Cross-Modal Attention Fusion}: To enable fine-grained information exchange among heterogeneous modalities, we design an all-to-all cross-attention fusion module. 
Given $M$ modality-specific token sequences $\{T^{(1)}, T^{(2)}, \dots, T^{(M)}\}$, each modality $i$ attends to the token representations of all other modalities:
\begin{equation}
    \texttt{KV}^{(i)} = \texttt{concat}\big(\{T^{(j)}\}_{j \neq i}\big), \quad
    \hat{T}^{(i)} = \texttt{CrossAttn}\big(T^{(i)}, \texttt{KV}^{(i)}\big),
\end{equation}
where $\texttt{CrossAttn}(\cdot)$ denotes a pre-norm multi-head attention followed by a feed-forward block with residual connections. 
The attended representations are concatenated to form the fused multimodal feature as $H_{\text{fused}} = \texttt{concat}\big(\hat{T}^{(1)}, \hat{T}^{(2)}, \dots, \hat{T}^{(M)}\big){\in}\mathbb{R}^{B \times (M L) \times D}$. This design allows each modality to dynamically query relevant features from other modalities, improving representational consistency and information sharing.

\noindent\textbf{Cross-Attention-based Temporal Encoding}: To incorporate temporal priors and reduce degradation under partial sensor gating, we add a cross-attention-based contextual encoder that integrates recent history. Let $\bar{H}_{t-K:t-1}$ be a compact memory of the past $K$ fused representations (mean-aggregated). At time $t$, the current features $\mathbf{R}_{t}^{L}$ attend to this memory via $H'_t = \texttt{CrossAttn}\big(H_{t}^{L}, \bar{H}_{t-K:t-1}\big)$, which enables the model to capture temporal coherence and long-range dependencies without a full recurrent state. After cross-modal fusion and contextual encoding, the fused representation is processed by a lightweight Transformer fusion block, followed a learned \texttt{[CLS]} token and a linear classification head.

\begin{figure}[!t]
  \centering
  \includegraphics[trim={200 230 220 200},clip,width=1\linewidth]{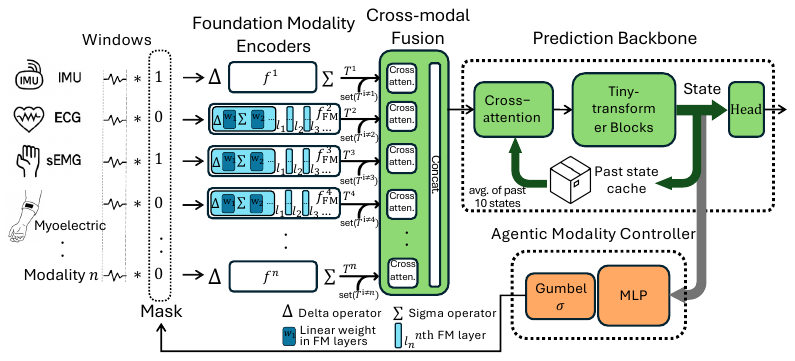}
  \vspace{-10mm}
    \caption{\small Proposed AMI architecture. The $\Sigma$-$\Delta$ module performs patch-wise computation and adaptive skipping. Features are fused via early cross-attention and processed by a Tiny-Transformer backbone with context reasoning. A Gumbel-MLP agent dynamically produces per-modality masks for the next window.}
    \label{fig:architecture}
  \vspace{-5mm}
\end{figure}



\vspace{-3mm}
\subsection{Agentic Modality Controller}

The Agentic Modality Controller (AMC) is a lightweight module that adaptively selects which modalities to process in the next time window. It takes the current FMPM representation and outputs per-modality gating decisions, and is trained jointly with the FMPM to balance task accuracy and sensing cost.


\noindent
\textbf{Feature aggregation and gate logits}: We first aggregate patch-level representations \(\mathbf{R}\in\mathbb{R}^{B\times (ML)\times D}\) into modality-level features, where $M$ is total modality number and $L$ is the number of patches per modality. Reshape and average over patches, and then we flatten the per-modality means to get \(\bar{\mathbf{R}} \in \mathbb{R}^{B\times M\times D}\), and feed them into a small MLP gate feature extractor \(g_{\phi}(\cdot)\): $\mathbf{\ell} = g_{\phi}\big(\bar{\mathbf{R}}\big)\in\mathbb{R}^{B\times M}$, where \(\mathbf{\ell}\) are pre-sigmoid gate logits with one logit per modality.

\noindent
\textbf{Gumbel--Sigmoid sampling and straight-through estimator}: To enable exploration during training we apply a Gumbel perturbation to logits and compute a continuous relaxation:
\vspace{-1.5mm}
\begin{align}
    u &\sim \texttt{Uniform}(0,1),\qquad g=-\log\big(-\log(u+\varepsilon)+\varepsilon\big),\\
    \tilde{\ell} &= (\ell + g)/\tau,\qquad
    p_{\texttt{soft}}=\sigma(\tilde{\ell}).
\end{align}
When $\tau$ is small the relaxation concentrates near $\{0,1\}$; at inference we omit noise and set $p_{\texttt{soft}}=\sigma(\ell)$. We use binary gating in the forward pass but propagate gradients through continuous relaxation. 
Formally, the binary decision for each modality is $p_{\texttt{hard}} = \mathbf{1}\{p_{\texttt{soft}}>0.5\}\in\{0,1\}^{B\times M}$, and the AMC applies $p_{\texttt{hard}}$ to mask sensors when computing the next-window inputs.
During backpropagation we treat the forward output as if it were the continuous $p_{\texttt{soft}}$ when computing gradients. Equivalently, for any scalar loss $\mathcal{L}$ that depends on the forward gating output $p_{\texttt{forward}}$, we use the chain rule approximation: $\frac{\partial \mathcal{L}}{\partial \ell}\approx
\frac{\partial \mathcal{L}}{\partial p_{\texttt{forward}}}\cdot
\frac{\partial p_{\texttt{soft}}}{\partial \ell}$.

The AMC adds a small feedback latency, but the combined delay of the FMPM and AMC (tens of milliseconds) is far shorter than the 1s input window we process  for continuous monitoring (100 samples at 100 Hz). Thus, sensor-mask decisions always arrive before the next window, and \textit{the method introduces no throughput or timing overhead} compared to the baseline.


\vspace{-3mm}
\subsection{Sigma-Delta Sensing}
To further improve efficiency and reduce redundant sampling and sensing, we propose a \textbf{Sigma--Delta ($\Sigma$-$\Delta$) Sensing} mechanism that exploits temporal redundancy across consecutive patches within each modality. Inspired by $\Sigma$-$\Delta$ modulation~\cite{o2016sigma}, the module operates during tokenization and adaptively determines whether the current patch contains sufficiently informative temporal change. Only patches exhibiting meaningful variations are actively tokenized, while stable patches reuse previously emitted tokens. 

\begin{figure}[t]
    \centering
    \includegraphics[trim={280 290 270 260},clip,width=1\linewidth]{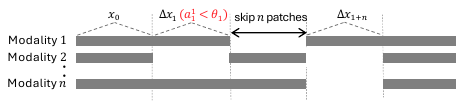}
    \vspace{-14mm}
    \caption{Thresholding and skipping in Sigma--Delta Sensing.}
    \label{fig:skip}
    \vspace{-5mm}
\end{figure}

\noindent\textbf{Patch-wise $\Sigma$-$\Delta$ Operation}: Let $\mathbf{x}^{(m)} \in \mathbb{R}^{C_m \times T}$ denote the raw input of modality $m$. We partition it into $L = T/P$ non-overlapping patches of size $P$ to get $\mathbf{x}_l^{(m)} \in \mathbb{R}^{C_m \times P}$. To capture temporal change within a window, we compute the patch-wise difference: $\Delta \mathbf{x}_l^{(m)} = \mathbf{x}_l^{(m)} - \mathbf{x}_{l-1}^{(m)}$, where $\ell\in[1,L]$. For convolution-based tokenizers, Conv1D only processes the difference patch $\Delta\mathbf{x}_l^{(m)}$. For foundation-model tokenizers, the $\Delta$ is applied prior to the initial linear projection (normally, it is a convolutional layer, the patch size is equal to kernel size), while subsequent layers operate on the accumulated ($\Sigma$) output signal.

\noindent\textbf{Adaptive Thresholding and Skip Policy}: As shown in Fig. ~\ref{fig:skip}, each modality maintains a sensitivity threshold $\theta_m$ and a skip horizon $k_{\text{skip}}$. To quantify patch activity, we compute the normalized magnitude of temporal change as $a_l^{(m)} 
    = \frac{1}{C_m P} 
    \sum_{c,p} \left| \Delta \mathbf{x}_{l,c,p}^{(m)} \right|$. A patch is considered \emph{inactive} (stable) if $a_l^{(m)} < \theta_m$, in which case tokenization is skipped for up to $k_{\text{skip}}$ consecutive patches. Otherwise, the patch is active. 
    The emitted token sequence is updated recursively:
    \vspace{-1mm}
\begin{equation}
    \tilde{\mathbf{T}}_l^{(m)} =
    \begin{cases}
        \texttt{Tokenizer}\bigl(\Delta \mathbf{x}_l^{(m)}\bigr), 
        & \text{if patch } l \text{ is active}, \\[6pt]
        \tilde{\mathbf{T}}_{l-1}^{(m)}, 
        & \text{if patch } l \text{ is skipped}.
    \end{cases}
\end{equation}

This patch-wise, intra-window $\Sigma$-$\Delta$ design needs only a single patch buffer per modality and adds negligible memory overhead. It enables true sensor duty-cycling: skipped patches incur no sampling or tokenization cost, significantly reducing the active sampling rate of the sensors 
with minimal impact on accuracy.

\vspace{-0.5mm}
\begin{algorithm}
\scriptsize
\raggedright
\caption{\small Pseudo code of AMI: Combining FMPM, AMC, $\Sigma$-$\Delta$ Sensing.}
\label{alg:empm}
\DontPrintSemicolon
\SetKwInput{KwIn}{Input}
\SetKwInput{KwOut}{Output}
\KwIn{Multimodal signal $X\in\mathbb{R}^{B\times M\times T}$; optional history $H_{\texttt{hist}}$}
\KwOut{Predicted logits $\mathbf{y}$ and current representation features $H$}
\BlankLine
\For{$m=1$ \KwTo $M$}{
    $x_m \leftarrow X[:,m,:]$\;
    $T^{(m)} \leftarrow \Sigma f_{}^{(m)}(\Delta x_m)$ 
}

\BlankLine

\ForEach{modality $i$}{
    $\texttt{KV}^{(i)} \leftarrow \texttt{concat}\!\bigl(\{T^{(j)}\}_{j\neq i}\bigr)$\;
    $\hat{T}^{(i)} \leftarrow \texttt{CrossAttn}\!\bigl(T^{(i)},\texttt{KV}^{(i)}\bigr)$\;
}
$H_{\texttt{fused}} \leftarrow \texttt{concat}\!\bigl(\hat{T}^{(1)},\dots,\hat{T}^{(M)}\bigr)$\  

\BlankLine

$H_{\texttt{fused}} \leftarrow \texttt{PositionalEncoding}(H_{\texttt{fused}})$\;

$H_{\texttt{fused}} \leftarrow \texttt{CrossAttn}(H_{\texttt{fused}},H_{\texttt{hist}})$\ 

$H \leftarrow \texttt{Tiny-Transformer}(H_{\texttt{fused}})$\ 

$\mathbf{y} \leftarrow \texttt{Classifier}\!\bigl(H_{cls\_token}\bigr)$\;


$R\leftarrow H,   \ell\leftarrow g_\phi(\text{mean}(R))$\;

\If{train}{
  add Gumbel noise, $p_{\text{soft}}\leftarrow\sigma((\ell+g)/\tau)$; $p_{\text{hard}}\leftarrow\mathbf{1}\{p_{\text{soft}}>0.5\}$ (ST)
}\Else{
  $p_{\text{soft}}\leftarrow\sigma(\ell)$; $p_{\text{hard}}\leftarrow\mathbf{1}\{p_{\text{soft}}>0.5\}$
}

\Return{$\mathbf{y},H, p_{\text{soft}}, p_{\text{hard}}$}\;

\end{algorithm}
\vspace{-2mm}

\vspace{-1mm}
\subsection{Training Objective}
\begin{figure}
    \centering
    \includegraphics[trim={255 275 230 210},clip,width=1\linewidth]{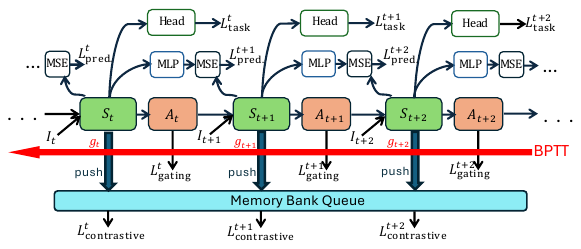}
    \vspace{-9mm}
    \caption{
    \small Training pipeline with unrolled timesteps optimized via BPTT. 
    Each fused state $S_t$ produces a prediction loss, gating loss, predictive coding loss, and a contrastive alignment loss computed against a memory bank. 
    The controller’s gating actions $A_t$ influence future observations, and all losses jointly update the model through temporal backpropagation.
    }
    \label{fig:pipeline}
    \vspace{-5mm}
\end{figure}

To jointly optimize predictive accuracy, sensing efficiency, and temporal consistency, AMI combines complementary objectives in a unified loss. Fig.~\ref{fig:pipeline} shows the goal is to (i) ensure accurate predictions, (ii) encourage sensor sparsity, (iii) maintain cross-modal alignment, and (iv) preserve temporal predictability. Formally, the total loss is defined as $\mathcal{L}_{\text{total}} =
\lambda_1 \mathcal{L}_{\text{task}} +
\lambda_2 \mathcal{L}_{\text{gating}} +
\lambda_3 \mathcal{L}_{\text{contrastive}} +
\lambda_4 \mathcal{L}_{\text{predictive}}$, where each $\lambda_i$ is a balancing coefficient controlling the contribution of the corresponding term.

\noindent
\textbf{Task Loss}: For accurate medical condition or activity classification based on multimodal input, we employ the standard cross‐entropy loss: $\mathcal{L}_{\text{task}} = - \frac{1}{N} \sum_{i=1}^{N} y_i \log \hat{y}_i$, where $y_i$ and $\hat{y}_i$ denote the ground‐truth and predicted class distributions, respectively.

\noindent
\textbf{Gating Regularization}: To encourage the AMC to minimize unnecessary sensor activations and thus improve energy efficiency, we introduce a sparsity‐inducing regularization term: $\mathcal{L}_{\text{gating}} = \frac{1}{M} \sum_{m=1}^M p_{\text{soft}}^{(m)}$, where $p_{\text{soft}}^{(m)}$ is the continuous gating probability for modality $m$. Intuitively, this term penalizes frequent activations, pushing the AMC to learn an optimal trade‐off between predictive performance and sensing cost. During backpropagation, this regularization interacts with the task loss to teach the AMC when a sensor is worth activating.

\noindent
\textbf{Contrastive Alignment Loss}: Biomedical signals from different sensors often capture correlated physiological phenomena. To ensure cross‐modal consistency even under modality dropout, we apply a contrastive alignment loss using an InfoNCE formulation:
\begin{equation}
\mathcal{L}_{\text{contrastive}} =
\log
\frac{
\exp(\text{sim}(\mathbf{h}_i, \mathbf{h}_j)/\tau)
}{
\sum_{k \neq i} \exp(\text{sim}(\mathbf{h}_i, \mathbf{h}_k)/\tau)
},
\end{equation}
where $\text{sim}(\cdot,\cdot)$ denotes cosine similarity and $\tau$ is a temperature hyperparameter. By pulling synchronized modalities closer in latent space and pushing unrelated ones apart, this term improves robustness to missing or gated inputs.

\noindent
\textbf{Predictive Coding Loss}: 
Our predictive coding term improves robustness when future sensory inputs are masked. Since the AMC may deactivate sensors to save energy, the model must preserve predictive capability by internally forecasting future multimodal states from current representations. We define the predictive loss as $\mathcal{L}_{\text{predictive}} = \left\|\text{MLP}(\mathbf{h}_t) - \mathbf{h}_{t+\delta}\right\|_2^2$, where $\mathbf{h}_t$ is the fused latent representation at time $t$, $\mathbf{h}_{t+\delta}$ is the target future embedding, $\delta$ is a small temporal offset, and $\text{MLP}(\cdot)$ is the predictor network.

\vspace{-2mm}
\subsection{Theoretical Analysis}

Consider $M$ modalities with information gains $I_1 \geq I_2 \geq ... \geq I_M$ relative to target $y$. Let $k^*(\epsilon) = |\{m : I_m \geq \epsilon\}|$ be the number of modalities needed for error at most $\epsilon$, and $T$ be the time horizon.
\vspace{-2mm}
\noindent
\newtheorem*{theorem*}{Theorem}
\begin{theorem*}[Sample Complexity]
To achieve prediction error $\leq \epsilon$, the required samples are:
\begin{itemize}[leftmargin=8pt]
    \item \textbf{Decoupled:} $\mathcal{O}(M \cdot T/\epsilon^2)$ samples
    \item \textbf{Joint (Ours):} $\mathcal{O}(k^*(\epsilon) \cdot T/\epsilon^2)$ samples in $\mathcal{O}(\log(M/\epsilon))$ rounds
\end{itemize}
\end{theorem*}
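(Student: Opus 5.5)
The plan is to model the problem as a stylized estimation game and compare the two regimes by counting modality-time samples. I will assume that at each time step $t\le T$ the predictor's error on $y$ is controlled by a sum of per-modality estimation errors. Each modality $m$ contributes a sub-Gaussian statistic whose mean encodes its information gain $I_m$. Reaching accuracy $\epsilon$ on any single modality's contribution then needs $\mathcal{O}(1/\epsilon^2)$ observations by a Hoeffding bound, with a union bound over modalities and time steps absorbed into log factors. The baseline count follows directly. A decoupled scheme fixes its sensing policy independently of the task, so it cannot know which modalities matter and must sample all $M$ at every step. That costs $M\cdot\mathcal{O}(1/\epsilon^2)$ per step, hence $\mathcal{O}(M T/\epsilon^2)$ in total. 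This gives the first bullet, an upper bound for the natural non-adaptive strategy.

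For the joint bound, I will treat the AMC as an adaptive elimination procedure run in rounds, in the style of successive elimination or a best-arm identification bandit. In round $r$ I keep an active set $S_r$ and sample each $m\in S_r$ until its estimated gain $\hat I_m$ has confidence width $2^{-r}$. I then drop every modality whose upper confidence bound falls below $\epsilon$. After $R=\lceil\log_2(M/\epsilon)\rceil$ rounds, the width is below $\epsilon/M$. So the surviving set is exactly $\{m: I_m\ge\epsilon\}$ up to boundary modalities, giving $|S_R|\le k^*(\epsilon)$ with high probability. Summing across modalities and time, the final-accuracy samples cost $k^*(\epsilon)\cdot\mathcal{O}(1/\epsilon^2)$ per step, which yields $\mathcal{O}(k^*(\epsilon)T/\epsilon^2)$. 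Discarding the remaining $M-k^*$ modalities only adds error $\sum_{m:I_m<\epsilon}$, and I will need this bounded by $\mathcal{O}(\epsilon)$. I will either impose this as a regularity assumption (geometrically decaying gains) or rescale $\epsilon$ inside the definition of $k^*$. I will also argue that the gating loss $\mathcal{L}_{\text{gating}}$, together with $\mathcal{L}_{\text{task}}$, implements this elimination in expectation. The sparsity penalty pushes $p_{\text{soft}}^{(m)}\to 0$ exactly when the marginal reduction of the task loss falls below the penalty weight, which plays the role of the threshold $\epsilon$.

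The main obstacle is controlling the exploration cost of the elimination phase. A modality with $I_m$ just below $\epsilon$ is eliminated only at a round where the width is around $I_m-\epsilon$. Naively, that modality could cost $1/(\epsilon-I_m)^2$ samples, which can exceed $1/\epsilon^2$. To handle this, I will first try a geometric doubling argument. Round $r$ costs $|S_r|\,4^r$, so the total exploration cost is dominated by the last round, and only the $k^*$ survivors plus $\mathcal{O}(1)$ boundary modalities are charged at the final width. If that fails, I will add a margin assumption $|I_m-\epsilon|\ge c\,\epsilon$, which makes exploration cost $\mathcal{O}(M\log(M/\epsilon)/\epsilon^2)$ once overall. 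That cost does not scale with $T$, so it is lower order for large $T$. The round count $\mathcal{O}(\log(M/\epsilon))$ then follows from the halving schedule.
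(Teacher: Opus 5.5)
Your route is the paper's: successive elimination with geometrically growing batches, logarithmically many rounds, and only the $k^*(\epsilon)$ relevant modalities surviving, with the gating and task gradients acting as the elimination rule. You correctly single out the exploration cost as the crux, but your first argument cannot control it, and the obstruction is not limited to boundary modalities. The threshold is $\epsilon$ and $I_m \ge 0$, so every sub-threshold modality has gap $\epsilon - I_m \le \epsilon$. Under your rule, even a useless modality ($I_m = 0$) cannot be reliably dropped until the width is below order $\epsilon$, i.e.\ until it has absorbed $\Omega(1/\epsilon^2)$ samples. Hence $|S_r| = M$ in essentially every round before the width reaches order $\epsilon$, and $\sum_r |S_r|\,4^r = \Theta(M/\epsilon^2)$. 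No charging scheme can bill only the $k^*$ survivors plus $\mathcal{O}(1)$ others, and there is no reason only $\mathcal{O}(1)$ gains lie near $\epsilon$. This is the standard thresholding-bandit lower bound $\sum_{m:\,I_m<\epsilon}(\epsilon - I_m)^{-2} \ge (M-k^*)/\epsilon^2$, not an artifact of your schedule. So in your model the joint bound fails for short horizons (e.g.\ $T = 1$ with $k^* \ll M$): identifying the relevant set already costs a full decoupled step. The paper's proof does not close this gap either. It charges only the survivors, and its appeal to ``simultaneous estimates for all modalities'' from $\nabla_\phi\mathcal{L}_{task}$ works only if one fully sensed multimodal observation counts as a single sample. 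That is not the per-modality accounting behind its $M/\epsilon^2$ decoupled bound.

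So your fallback is not optional; it is the whole argument. Pay the identification cost $\mathcal{O}(M\log(M/\epsilon)/\epsilon^2)$ once, stopping at width $\Theta(\epsilon)$. Handle boundary modalities either by your margin assumption or by replacing $k^*(\epsilon)$ with $k^*(\epsilon/2)$, since at width $\epsilon/4$ every survivor has $I_m \ge \epsilon/2$. Then sample only survivors at $\mathcal{O}(k^*/\epsilon^2)$ per step. This proves $\mathcal{O}\bigl(k^*T/\epsilon^2 + M\log(M/\epsilon)/\epsilon^2\bigr)$, which matches the statement only under extra hypotheses: a horizon condition such as $T \gtrsim (M/k^*)\log(M/\epsilon)$, stationary gains, and your assumption that the discarded gains sum to $\mathcal{O}(\epsilon)$. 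Put these in the theorem rather than claiming it unconditionally.

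Two smaller repairs. First, do not run elimination to width $\epsilon/M$. Survivors would then need $\Theta(M^2/\epsilon^2)$ samples each, contradicting your own exploration estimate. It is also unnecessary: stopping at width $\Theta(\epsilon)$ takes $\mathcal{O}(\log(1/\epsilon))$ rounds, already within $\mathcal{O}(\log(M/\epsilon))$, which is how the paper gets its round count. Second, make the per-step error a weighted average or maximum of per-modality errors rather than a sum. Otherwise $k^*$ modalities each estimated to accuracy $\epsilon$ give error $k^*\epsilon$.
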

\vspace{-2mm}
\noindent
\textit{Proof.} Decoupled methods require $\mathcal{O}(1/\epsilon^2)$ samples per modality to estimate $\epsilon$-level contributions with high probability, totaling $\mathcal{O}(M/\epsilon^2)$ samples. Joint optimization leverages gradient $\nabla_\phi \mathcal{L}_{task}$ providing simultaneous estimates for all modalities. Using successive elimination with doubling batch sizes, round $r$ uses $2^r$ samples achieving confidence width $\mathcal{O}(1/\sqrt{2^r})$. Reaching $\epsilon$-accuracy needs $r=\mathcal{O}(\log(1/\epsilon^2))=\mathcal{O}(\log(1/\epsilon))$ rounds, with only $k^*$ modalities surviving elimination. This yields factor $M/k^*$ sample reduction and logarithmic convergence.

The advantage is most significant when $k^*(\epsilon){\ll}M$ (sparse sufficient statistics). Our results confirm this: MHEALTH has $k^*/M{\approx}0.38$ (high sparsity), while HMC has $k^*/M{\approx}0.69$ (low sparsity).

\vspace{-2mm}
\section{Experiments}
\subsection{Setup}
\textbf{Datasets}: We evaluate AMI on three public multimodal datasets.  
\textit{MHEALTH} \cite{mhealth_319} provides 12 activity classes from 3 wearable devices (ACC, GYRO, MAG, ECG), sampled at 50\,Hz and segmented into 2\,s windows. 
\textit{HMC Sleep}~\cite{alvarez_estevez_2022_hmc_sleep} includes EEG, ECG, PPG, EOG, and EMG overnight recordings; all channels are resampled to 100\,Hz and divided into standard 30\,s epochs.  
\textit{WESAD}~\cite{schmidt2018introducing} contains multimodal physiological signals (e.g., ECG, EDA, respiration, ACC) from chest and wrist devices; we resample to 100\,Hz and use 15\,s windows aligned with stress labels in 3 classes.

\noindent
\textbf{Models and Training Settings}: We use a unified hyperparameter setup across all experiments. Each modality is tokenized with specific patch size(10 for MHEALTH, 30 for HMC, 15 for WESAD), with a model embedding dimension of 256. The multimodal transformer backbone has 4 layers, 8 attention heads, and a feed-forward size of 1024, operating over a temporal history of 10 windows. The AMC is a lightweight MLP with a hidden size of 256. For temporal efficiency, we set $k_{\text{skip}}=2$ and a initial threshold $\theta=0.1$ for all modalities. Models are trained for 100 epochs with batch size 32, learning rate $1\times10^{-4}$ (cosine decay), and weight decay $1\times10^{-3}$. Since the AMC and FMPM are trained jointly, we apply BPTT with a gradient window of 10 steps. The total loss is a weighted sum of task ($\lambda_1{=}1.0$), gating ($\lambda_2{=}0.1$), contrastive ($\lambda_3{=}0.05$), and predictive ($\lambda_4{=}0.2$) objectives.

\begin{table}[!t]

\centering
\fontsize{8pt}{8pt}\selectfont
\setlength{\tabcolsep}{2pt}
\vspace{-3mm}
\caption{\small Performance comparison of the proposed method and prior state-of-the-art approaches on MHEALTH, HMC, and WESAD datasets. Accuracy (Acc.), F1 score (F1), and average modality sensing rate (Sensing) are reported. We also show the effect of varying the gating loss coefficient $\lambda_2$ on the trade-off between predictive performance and sensor efficiency.}

\vspace{-4mm}
\begin{tabular}{llccc}         
\toprule
Dataset & Method & Acc. (\%) & F1 & Sensing (\%) \\
\midrule
\multirow{5}{*}{\strut MHEALTH} 
        & Almujally et al. (2025) ~\cite{almujally2025wearable}& 94.67  & -  & 100  \\
        & Debache et al. ~\cite{debache2020lean}  & 98.2  &-  &100  \\
        & Sharma et al. ~\cite{sharma2023hybrid}& 99.07 & 99.1 & 100 \\
        \rowcolor{gray!30}
        &AMI (no sensing reduction) &99.10 &99.13 &100 \\
        \rowcolor{gray!30}
        &AMI ($\lambda_2=0.05$) &99.12 &99.12 &38.19 \\
        \rowcolor{gray!30}
        &AMI ($\lambda_2=0.1$) &99.04  &99.04 &33.3 \\
        \rowcolor{gray!30}
        &AMI ($\lambda_2=0.2$) &92.89  &91.78 &26.5 \\
\midrule
\multirow{5}{*}{\strut HMC}     
        & Estevez et al. (2022)~\cite{alvarez2021inter} &79.0  &-  & 100 \\
        & SleepMG~\cite{ma2024sleepmg} &69.24  &71.68  & 100 \\
        &PhysioOmni~\cite{jiang2025towards} &73.77  &77.79  &100  \\
        \rowcolor{gray!30}
        &AMI (no sensing reduction) &79.89 &79.12 &100 \\
        \rowcolor{gray!30}
        &AMI ($\lambda_2=0.001$) &78.21  &77.91  &73.01  \\
        \rowcolor{gray!30}
        &AMI ($\lambda_2=0.005$) &76.81  &77.21  &69.1  \\
        \rowcolor{gray!30}
        &AMI ($\lambda_2=0.01$) &67.43  &68.5  &53.54  \\
\midrule
\multirow{3}{*}{\strut WESAD}
        & Schmidt et al.~\cite{schmidt2018introducing}& 79.57 &  68.85 & 100 \\
        & Aleem et al.~\cite{abd2024deep}& 90.2 & 90.0 & 100 \\
        \rowcolor{gray!30}
        &AMI (no sensing reduction) &91.4 &90.02 &100 \\
        \rowcolor{gray!30}
        &AMI ($\lambda_2=0.05$) & 91.4 & 89.61 & 42.6 \\
        \rowcolor{gray!30}
        &AMI ($\lambda_2=0.1$) & 89.41 & 88.21 & 37.2 \\
\bottomrule
\end{tabular}
\vspace{-6mm}
\label{tab:results}
\end{table}

\vspace{-1mm}
\subsection{Accuracy Results}
We compare AMI with several recent SOTA approaches on three multimodal physiological datasets. \textit{On MHEALTH}, our model achieves 99.12\% accuracy and 99.12 F1, outperforming strong prior baselines such as Almujally et al.~\cite{almujally2025wearable} (94.67\%) and Debache et al.~\cite{debache2020lean} (98.2\%). Notably, this result is attained while activating only 38.19\% of sensing on average, showing that the agentic modality-selection mechanism can aggressively skip redundant samples without harming predictive performance. Even under stronger regularization ($\lambda_2 = 0.1$), the model retains competitive accuracy (99.04\%) while further lowering modality sensing to 33.3\%, demonstrating a stable efficiency–accuracy trade-off. \textit{On the HMC sleep-staging dataset}—a more challenging, long-horizon temporal task(where we take 30 seconds as our predict window length)—AMI achieves 78.21\% accuracy and 77.91 F1, closely matching or exceeding prominent supervised baselines~\cite{alvarez2021inter, ma2024sleepmg, jiang2025towards}. While previous fully-supervised models rely on full multimodal streams, our model requires only 73.01\% of sensing on average. This highlights the ability of the AMC to selectively activate modalities that are informative for sleep-stage transitions. \textit{On WESAD}, AMI achieves 91.4\% accuracy and 89.61 F1, surpassing recent deep multimodal architectures such as Aleem et al.~\cite{abd2024deep} (90.2\%), using only 42.6\% of modalities. Averaging across the three datasets, AMI reduces sensor usage by $48.8\%$ while increasing the SOTA accuracy by $1.9\%$.

We also show the trade-off between sensing rate and performance when varying the gating coefficient $\lambda_2$. For MHEALTH, a small weight (0.05) results in 99.23\% accuracy but limited reduction in sensing (38.19\%). Increasing $\lambda_2$ strengthens sensing sparsity, reaching 33.3\% modality usage at $\lambda_2=0.1$ with negligible accuracy change (99.04\%). A larger value (0.2) further reduces sensing to 26.5\%, but at the cost of performance degradation to 92.89\%.

Visualization heatmaps on MHEALTH and HMC (Fig.~\ref{fig:activation}) further demonstrate that the AMI and $\Sigma$-$\Delta$ Sensing effectively learns modality-dependent activation patterns.

\vspace{-1mm}

\subsection{Ablation Studies}


To better understand the contribution of each component in our architecture, we perform ablation experiments on the MHEALTH, shown in Table~\ref{tab:ablation}.  

\noindent
\textbf{Impact of AMC and $\Sigma$-$\Delta$ Sensing}: 
With only $\Sigma$-$\Delta$ Sensing enabled, the sensing rate increases to 78.90\% while achieving 99.09\% accuracy.
Conversely, using only AMC results in a sensing rate of 41.67\% with 98.89\% accuracy.
These results show that both mechanisms reduce sensing: AMC handles coarse-grained modality selection, while $\Sigma$–$\Delta$ suppresses redundant intra-window measurements, making the two strategies complementary.

\noindent
\textbf{Impact of components in FMPM}: 
Removing cross-modal fusion leads to notable performance degradation gives 98.11\% accuracy, confirming its function for learning joint multimodal representations.
Eliminating contextual temporal encoding reduces accuracy to 93.21\% (a 5.91\% drop), showing that long-range temporal context is critical, especially for physiological dynamics.
Removing both components results in a further decrease to 92.80\%.

\noindent
\textbf{Loss-function ablations}: Removing the contrastive consistency loss drops accuracy from 99.12\% to 97.61\% (1.51\%), indicating reduced robustness to modality corruption. Removing predictive coding yields 98.64\% (0.48\% drop), showing its role in stabilizing AMC decisions. Disabling both losses further reduces accuracy to 97.21\%.


\begin{figure}
    \centering
\includegraphics[width=1\linewidth]{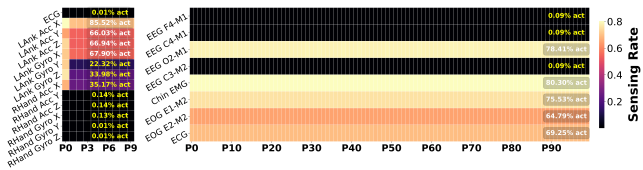}
\vspace{-8mm}
    \caption{\small Sensing rate heatmap over patches obtained from the proposed method on (\textit{Left}) MHEALTH and (\textit{Right}) HMC.}
    \label{fig:activation}
    \vspace{-4mm}
\end{figure}

\begin{table}[!t]
\centering
\fontsize{8pt}{8pt}\selectfont
\setlength{\tabcolsep}{3pt}
\caption{\small Ablation study on MHEALTH. We report metrics for the AMI pipeline (all components turned on) and for variants with individual components removed. Sensing reduction, FMPM, and loss-function contributions are evaluated separately to isolate their effects.}

\vspace{-4mm}
\begin{tabular}{lccc}
\toprule
Configuration & Acc. & F1 & Sensing \\
\midrule
AMI (full model) &99.12 &99.12 &38.19  \\
\midrule
\textit{Sensing reduction ablation} \\
w/o AMC &98.89  &99.02 & 78.90 \\
w/o $\Sigma$-$\Delta$ Sensing &99.09 &99.12 &41.67  \\
\midrule
\multicolumn{4}{l}{\textit{FMPM ablation (w/o any sensing reduction)}} \\
w/o cross-modal fusion &98.11  &98.12 &-  \\
w/o context encoding &93.21  &92.17 &-  \\
w/o cross-modal fusion and context encoding  &92.80  &91.45 &- \\
\midrule
\textit{loss ablation (w/o any sensing reduction)} \\
w/o contrastive loss &97.61 &97.84 &-\\
w/o predictive coding loss &98.64 & 98.21 &-\\
w/o contrastive loss and predictive coding loss &97.21 &97.41 &-\\

\bottomrule
\end{tabular}
\vspace{-3mm}
\label{tab:ablation}
\end{table}

\vspace{-1mm}
\subsection{Robustness Studies}
\noindent
\textbf{Random modality masking}: The AMC is disabled and we use a pretrained model by randomly dropping modalities according to a Bernoulli mask with probability $p$, simulating unexpected sensor failures or severe signal corruption. This setup isolates the robustness of the FMPM, as predictions rely solely on the learned multimodal representations without adaptive scheduling. As shown in Table~\ref{tab:masking}, at $p=0.2$, accuracy remains at 99.10\%. Even when 50\% of modalities are randomly removed ($p=0.5$), accuracy only drops slightly to 99.08\%. Significant degradation occurs for high masking rates ($p=0.8$). 


\begin{table}[!t]
\centering
\fontsize{8pt}{8pt}\selectfont
\caption{\small \textit{(Left)} Robustness under random modality dropout, and \textit{(Right)} Effect of input sampling rate on MHEALTH.}
\label{tab:robustness}
\vspace{-4mm}
\begin{minipage}{0.4\columnwidth}
\centering
\setlength{\tabcolsep}{4pt}
\begin{tabular}{ccc}
\toprule
$p$ & Acc. (\%) & F1 (\%) \\
\midrule
0     & 99.12 & 99.12 \\
0.2   & 99.10 & 99.12 \\
0.5   & 99.08 & 99.01 \\
0.8   & 87.21 & 88.47 \\
\bottomrule
\end{tabular}
\vspace{2mm}
\captionsetup{font=small}
\label{tab:masking}
\end{minipage}
\begin{minipage}{0.55\columnwidth}
\centering
\setlength{\tabcolsep}{1pt}
\begin{tabular}{cccc}
\toprule
Sampling rate     & Acc. (\%) & F1 (\%) & Sensing (\%) \\
\midrule
50Hz (default) & 99.12 & 99.12 & 38.19 \\
25Hz           & 94.30 & 94.18 & 39.0  \\
5Hz            & 87.04 & 86.85 & 39.0  \\
\bottomrule
\end{tabular}
\vspace{2mm}
\captionsetup{font=small}
\label{tab:sampling}
\end{minipage}
\vspace{-8mm}
\end{table}

\noindent
\textbf{Low sampling rate evaluation}: We further assess robustness to temporal signal degradation on MHEALTH by resampling all sensor streams to lower sampling rates. Since reducing the sampling frequency reduces the number of samples per window, we proportionally adjust the patch size to preserve consistent temporal coverage per patch, specifically for any new sampling rate $f$ we have $P(f) = P_0 \cdot \frac{f}{f_0}$, where $f_0$ is the default sampling rate (50 Hz in MHEALTH) and $P_0$ is the base patch size. We observe that the model maintains performance at 25~Hz with $4.8\%$ accuracy degradation, and remains functional even at 5~Hz with an acc. of $87.04\%$. 

\vspace{-2mm}
\subsection{Hardware Analysis}

We benchmark our system across the HMC, MHEALTH, and WESAD datasets on four representative hardware platforms:
an ARM CPU (Apple~M1),
NVIDIA Jetson Orin (TensorRT),
RTX A6000 (PyTorch), and
RTX A6000 (TensorRT).
For each dataset–platform pair, we report per-iteration latency and energy for:
(i) the \emph{AMC},
(ii) the \emph{FMPM},
and (iii) the \emph{AMI} pipeline, consisting of \emph{AMC} and \emph{FMPM}.

\noindent\textbf{Setup.}
All latency and energy measurements use a batch size of $B{=}1$, with $K{=}10$ warm-up iterations and $N{=}100$ timed iterations.
GPU energy is measured from the NVML total-energy counter.
TensorRT deployments on Jetson and A6000 apply standard optimizations such as layer fusion, kernel autotuning, and FP16 execution.

\noindent\textbf{Latency \& Energy Savings}:
To evaluate the runtime impact of AMC and $\Sigma$-$\Delta$ Sensing, we compare AMI with an FMPM-only baseline under the \emph{same} sensing configuration.
Table~\ref{tab:trt_latency_energy} shows that on the RTX~A6000 (TensorRT), among all three datasets. AMI achieves its strongest gains on WESAD, reducing latency by 56.33\% and energy by 37.1\% while sensing only 42\% of the modalities. 
On average across datasets, AMI delivers 31.9\% lower latency and 24.8\% lower energy consumption, which directly translates to longer battery life in continuous monitoring scenarios.


\vspace{1mm}
\begin{table}[!t]
\centering
\fontsize{8pt}{8pt}\selectfont
\setlength{\tabcolsep}{1.3pt}
\caption{\small Runtime latency and energy comparison between the baseline and proposed AMI during inference on an ARM CPU.}
\vspace{-3mm}
\begin{tabular}{llccc}
\toprule
Dataset & Model & Lat. (ms) & Eng.\,(mJ) & Savings (lat./eng.) \\
\midrule
\multirow{2}{*}{\strut MHEALTH} 
&\small{FMPM (100\% sensing)} & 5.28 & 254.74 & -- \\
&\small{AMI (38\% sensing)}        & 3.69 & 200.14 &  30.01\% / 21.40\% \\
\hline
\multirow{2}{*}{\strut HMC} 
&\small{FMPM (100\% sensing)} & 3.17 & 175.39 & -- \\
&\small{AMI (73\% sensing)}        & 2.87 & 147.31 & 9.46\% / 16.01\% \\
\hline
\multirow{2}{*}{\strut WESAD} 
&\small{FMPM (100\% sensing)} & 5.68 & 257.32 & -- \\
&\small{AMI (42\% sensing)}   & 2.48 & 161.64 &  56.33\% / 37.1\% \\
\bottomrule
\end{tabular}
\vspace{-2mm}
\label{tab:trt_latency_energy}
\end{table}

Fig.~\ref{fig:hw_latency_energy} summarizes per-iteration latency and energy across the four hardware platforms as the number of modalities varies.
On the left, latency is decomposed into AMC and FMPM. Across all platforms, the FMPM dominates the end-to-end latency, while the AMC branch remains consistently much smaller,
reinforcing that the AMC overhead is negligible relative to FMPM cost and substantially smaller than the gains obtained by reducing sensing. On the right, we report per-iteration energy on the GPU platforms.
TensorRT significantly accelerates the AMI: Across sensing-rate settings, A6000 (TensorRT) achieves a 5.1–5.4$\times$ latency speedup over A6000 (PyTorch) and reduces GPU energy consumption by 68.2–75.1\%. Comparing hardware platforms, Jetson (TensorRT) provides a 49.3\% energy reduction, while being 36.4\% slower in latency than A6000 (TensorRT) on average across sensing rates.

\begin{figure}[t]
\centering
\vspace{-2mm}
\includegraphics[width=0.98\linewidth]{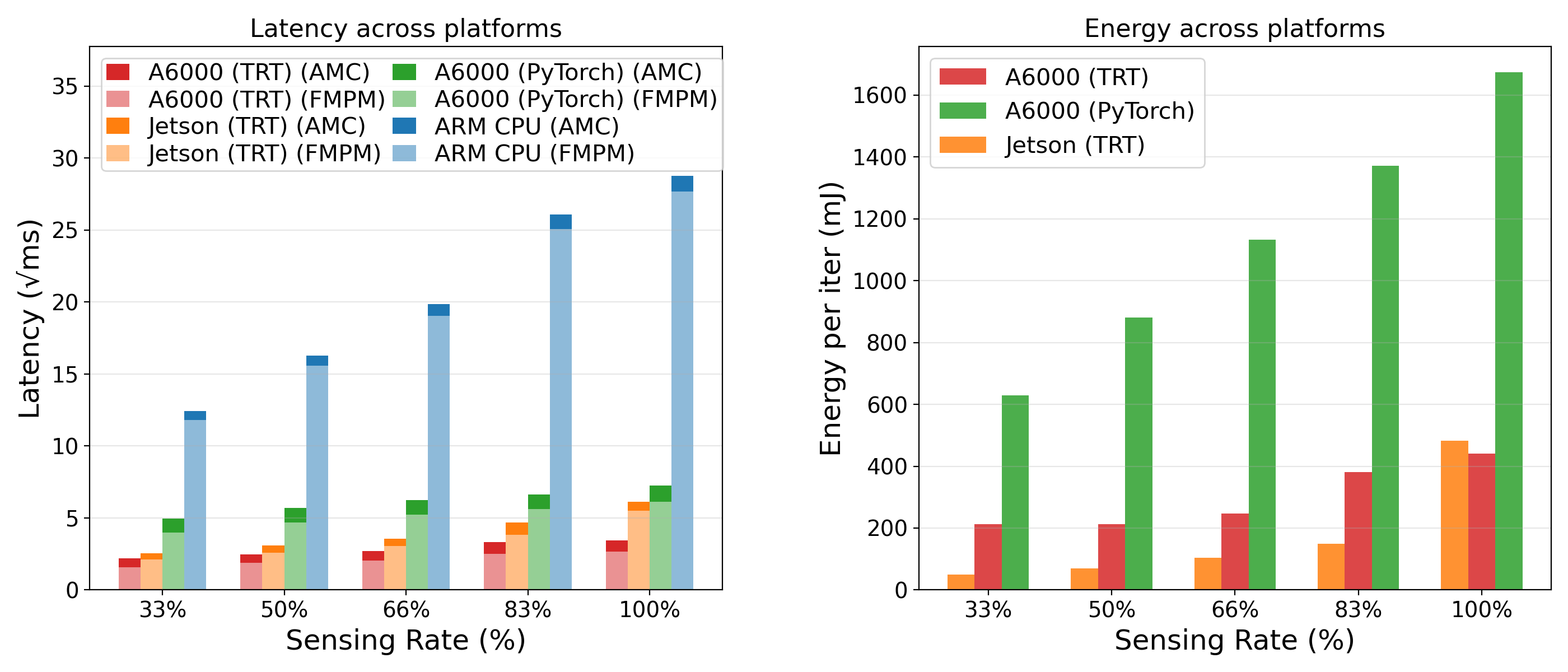}
\vspace{-4mm}
\caption{\small Per-iteration latency (left) and energy (right) as the sensing rate varies on MHEALTH. Measurements are across ARM CPU, Jetson (TensorRT) and A6000 (PyTorch and TensorRT). Latency is decomposed into AMC and FMPM and values are in $\sqrt{}$ of ms. 
}
\label{fig:hw_latency_energy}
\vspace{-5mm}
\end{figure}

\vspace{-1mm}

\section{Conclusions}

This paper presented a unified framework that jointly learns when to sense and how to infer, addressing the core energy–accuracy bottlenecks of multimodal medical monitoring. By integrating a learned agent, $\Sigma$–$\Delta$ temporal sensing, and a foundation-backed prediction model, our framework enables dynamic sensing policies that adapt to both task demands and signal redundancy. Our multi-objective training aligns sensing, fusion, and temporal prediction in a single end-to-end system, yielding substantial reductions in sensing cost while improving the SOTA accuracy. Experiments across three diverse biomedical datasets show 31.9\% lower latency and 24.8\% lower energy consumption. These results show that joint sensing–inference optimization is a principled and practical path toward efficient edge intelligence in wearable and implantable systems. By enabling long-duration, low-power multimodal monitoring with reliable on-device inference, this work offers a promising foundation for continuous clinical assessment and earlier intervention in real-world patient care.


\vspace{-0.05in}
\bibliographystyle{ACM-Reference-Format}
\bibliography{main}

\end{document}